\newcommand{\problemConstruction}[3]
{
\begin{figure*}\noindent
 \begin{framed}
   \centerline{{\sc #1}}     
   \bigskip
   \noindent
   \begin{tabularx}{\textwidth}{lX}
     \emph{Given:} & #2	\\
     \emph{Task:} & #3
   \end{tabularx}
 \end{framed}
\end{figure*}
}
\newtheorem{observation}{Observation}
\title{Minimum Power Range Assignment for Symmetric Connectivity in Sensor Networks with two Power Levels}
\author{Stefan Hoffmann \and Egon Wanke}
\institute{\{Stefan.Hoffmann, E.Wanke\}@uni-duesseldorf.de \\Institute of Computer Science, Heinrich-Heine-Universit\"at D\"usseldorf\\ D-40225 D\"usseldorf, Germany}
\begin{document}
\maketitle
\begin{abstract}
This paper examines the problem of assigning a transmission power to every node of a wireless sensor network. The aim is to minimize the total power consumption while ensuring that the resulting communication graph is connected. We focus on a restricted version of this {\sc Range Assignment (RA)} problem in which there are two different power levels. We only consider symmetrical transmission links to allow easy integration with low level wireless protocols that typically require bidirectional communication between two neighboring nodes. We introduce a parameterized polynomial time approximation algorithm with a performance ratio arbitrarily close to $\pi^2/6$. Additionally, we give an almost linear time approximation algorithm with a tight quality bound of $7/4$.

\end{abstract}
\keywords{range assignment, power consumption, radio network, sensor network, wireless ad hoc network, approximation algorithm}

\section{Introduction}

A sensor network consists of a large number of small devices that are deployed across a geographic area to monitor certain aspects of the environment. These sensor nodes are able to communicate with each other through a wireless communication channel. As a result of their size the resources of the sensor nodes are strongly limited in terms of available energy. This leads to a very limited range of the radio transmitters.

Minimizing the power consumption to prolong the lifetime of the network has received considerable attention, because the replacement of batteries in a large scale network is practically difficult or even impossible.

A common approach to save energy is based on adjusting the power levels of the sensor nodes' radio transmitters while maintaining a connected network structure. This {\sc Range Assignment (RA)} problem is typically defined as computing a range assignment $f: P\to \mathbb{R}^+$ for a set of points $P\subset\mathbb{R}^n$ ($1\le n\le 3$), representing the nodes in the network, such that the total energy $\sum_{p\in P} c(f(p))$ is minimal ($c$ being a cost function according to a radio wave propagation model) under the constraint that the graph $(P,E)$, $E:=\{(p,q)\in P^2 ~|~ \Vert p-q\Vert_2 \le f(p)\}$ is strongly connected, where $\Vert p-q\Vert_2$ denotes the Euclidean distance between $p$ and $q$.
This is called the {\em asymmetric} {\sc RA} problem.

Since many low level protocols for wireless communication like IEEE 802.11 require bidirectional transmission links for hop-by-hop acknowledgements \cite{IEEE80211}, the constraint of the asymmetric {\sc RA} problem is often altered to only consider bidirectional edges $E':=\{\{p,q\} ~|~ (p,q)\in E \wedge (q,p)\in E\}$ for an undirected connected graph. This problem is called the {\em symmetric} {\sc RA} problem.

\subsection*{Contribution}
This paper examines the symmetric version of the {\sc RA} problem in networks with two power levels (denoted by {\sc 2LSRA}). The $2$-dimensional decision version of the {\sc 2LSRA} problem is NP-complete, because the proof of \cite{CK07} for the asymmetric case can easily be modified to the symmetric case. We introduce an approximation algorithm (denoted by {\sc Approx2LSRA$_k$}) for the {\sc 2LSRA} problem parameterized by an integer $k\ge 2$. We proof that the performance ratio of {\sc Approx2LSRA$_k$} is between
\[\frac{3k-2}{2k-2} \quad \text{and} \quad \frac{1}{k-1}+\sum_{i=1}^{k-1} \frac{1}{i^2}.\]
Algorithm {\sc Approx2LSRA$_k$} can easily be implemented such that its running time is in $O(|V|^k)$, where $V$ is the set of sensor nodes. This provides a parameterized polynomial time approximation algorithm with a performance ratio arbitrarily close to $\pi^2/6$. For the case $k=3$, for which {\sc Approx2LSRA$_k$} has a tight performance of $7/4$, we provide an implementation running in almost linear time.

The approximation algorithm {\sc Approx2LSRA$_k$} is based solely on the graph structure and the following assumption: If a node $u$ can receive messages from a node $v$ while $v$ transmits with power $r_{\min}$, $u$ can still receive messages from $v$ if $v$ increases its power level to $r_{\max}$. Note that the algorithm requires neither any embedding of the nodes into $\mathbb{R}^d$ for some dimension $d$ nor any kind of correlation between the transmission links, as for example, if $u$ can receive messages from $v$ then $v$ can receive messages from $u$.

\subsection*{Related Work}
Both versions of the {\sc RA} problem are typically considered with cost function $c(d) = \beta\cdot d^\alpha + \gamma$ for a {\em distance-power gradient} $\alpha\ge 1$, a {\em transmission quality parameter} $\beta$ and some $\gamma\ge 0$. 
Kirousis et al.~proved in \cite{KKKP00} that the decision version of the asymmetric RA problem is NP-hard for 3-dimensional point sets for all $\alpha\ge 1$, $\beta = 1$ and $\gamma=0$. They also presented a factor $2$ approximation algorithm based on a minimum spanning tree. The hardness result was extended by Clementi et al.~in \cite{CPS99} to 2-dimensional point sets. Clementi et al.~also showed in \cite{CPS99} the APX-hardness of the 3-dimensional case. This implies that there is no polynomial time approximation scheme (PTAS) for the 3-dimensional asymmetric RA problem, unless P=NP. 
The 1-dimensional case, on the other hand, is solvable in time $\mathcal{O}(|P|^4)$ by dynamic programming techniques, see \cite{KKKP00}.

For the symmetric {\sc RA} problem, Blough et al.~showed in \cite{BLRS02} that the  decision version is also NP-hard for 2- and 3-dimensional point sets, whereas Althaus et al.~have shown in \cite{ACMPTZ03,ACMPTZ06} that there is a simple approximation algorithm with approximation factor $11/6$ as well as a parameterized polynomial time approximation algorithm with a performance ratio arbitrarily close to $5/3$.

Carmi and Katz introduced in \cite{CK07} the additional restriction of only two different transmission powers $r_{\min}$ and $r_{\max}$ with $r_{\min} < r_{\max}$. This simplifies the {\sc RA} problems to the question of how to determine a minimum set of sensor nodes that have to transmit with maximum power $r_{\max}$ in order to obtain connectivity. They study the $2$-dimensional asymmetric version of this problem, proof its NP-completeness, and present an approximation algorithm with a factor of $11/6$ as well as a rather theoretical approximation algorithm with a factor of $9/5$.

While formulated in an entirely different way, the {\sc 2LSRA} problem is basically equivalent to the problem called {\sc Max Power Users} given in \cite{LLR04}, which was renamed to {\sc $\{0,1\}$-MPST} by the authors of \cite{AZ09}. And although this work has been conducted independently of \cite{LLR04}, the concept used for an approximation algorithm is remarkably similar: In this paper a family of approximation algorithms, called {\sc Approx2LSRA$_k$} and based on a positive integer $k\in\mathbb{N}$, is developed and analyzed. The authors of \cite{LLR04} introduce two approximation algorithms which are in fact equivalent to our special cases {\sc Approx2LSRA$_3$} and {\sc Approx2LSRA$_4$}. However, the best known approximation ratio possibly for this problem is given in \cite{AZ09} where the authors prove that the algorithm presented in \cite{ACMPTZ06} for the more general problem of arbitrary power levels achieves an approximation ratio of $3/2$ for the special case of two power levels. It is noteworthy that the algorithm given in \cite{ACMPTZ06} is based on a rather complex approximation scheme for the classical {\sc Steiner Tree} problem, while the idea used in this paper and in \cite{LLR04} is a simple, fast and easy to implement greedy approach.

\section{Definitions and Terminology}

The problem we consider is motivated by wireless sensor networks in which the nodes have two transmission power levels, a {\em min-power} and a {\em max-power level}. Such networks are usually represented as directed graphs in that the directed edges represent connections from source nodes to target nodes. Since almost all communication protocols are based on symmetric connections, we are mainly interested in the underlying symmetric networks. These networks can be represented by undirected graphs.

Our goal is to find a minimum number $k$ of nodes such that if these $k$ nodes use max-power and the remaining nodes use min-power then the resulting underlying symmetric network is connected.

The problem is now defined more formally. We consider {\em undirected graphs} $G=(V,E)$, where $V$ is the set of {\em nodes} and $E \subseteq \{\{u,v\}~|~u,v \in V,~u \not=v\}$ is a set of {\em undirected edges}. A graph $G'=(V',E')$ is a {\em subgraph} of $G=(V,E)$ if $V' \subseteq V$ and $E' \subseteq E$. It is an {\em induced subgraph} of $G$, denoted by $G|_{V'}$, if $E' = E \cap \{\{u,v\}~|~ u,v \in V'\}$.

A {\em path} between two nodes $u,v \in V$ is a sequence of nodes $u_1,\ldots,u_k \in V$, $k \geq 1$, such that $u_1 = u$, $\{u_i,u_{i+1}\} \in E$ for $i=1,\ldots,k-1$ and $u_{k}=v$. Graph $G=(V,E)$ is {\em connected} if there is a path between every pair of nodes. A path $u_1,\ldots,u_k$, $k \ge 3$, is a {\em cycle} if $\{u_k,u_1\} \in E$, a graph without cycles is a {\em forest}, and a connected forest is a {\em tree}. A {\em connected component} of $G$ is a maximal induced subgraph of $G$ that is connected. Let $CC(G)$ be the set of all node sets of the connected components of $G$.

For a node $u \in V$, let $d_{\min} (u) \subseteq V$ and $d_{\max} (u) \subseteq V$ be the set of nodes reachable from $u$ with min-power or max-power, respectively. That is, $d_{\min}$ and $d_{\max}$ can be considered as mappings from $V$ to the power set ${\cal P}(V)$ of $V$, i.e., to the set of all subsets of $V$. 

\begin{definition}
\label{D01}
For a node set $V$, a subset $U \subseteq V$, and two mappings $d_{\min}:V \to {\cal P}(V)$ and $d_{\max}:V \to {\cal P}(V)$
define the {\em min-power edges between nodes in $U$} as 
\[E_{\min}(U) := \{\{u,v\}~|~u,v \in U, u\not=v, u \in d_{\min}(v), v \in d_{\min}(u)\} \text{,}\]
the {\em max-power edges between nodes in $U$} as 
\[E_{\max}(U) := \{\{u,v\}~|~u,v \in U, u\not=v, u \in d_{\max}(v), v \in d_{\max}(u)\} \text{ and }\]
the {\em min-max-power graph}
$G(V,d_{\min},U,d_{\max}):=(V,E_{\min}(V) \cup E_{\max}(U))$.
\end{definition}

The min-max-power graph $G(V,d_{\min},U,d_{\max})$ is also denoted by $G(U)$ if $V$, $d_{\min}$, and $d_{\max}$ are known from the context. We also assume that $d_{\min}(v) \subseteq d_{\max}(v)$ for all $v\in V$.

The min-max-power graph $G(U)$ represents the underlying symmetric network for the case that the nodes of $U$ use max-power and the remaining nodes of $V\setminus U$ use min-power. Graph $G(\emptyset)$ is also called the {\em min-power graph} whereas graph $G(V)$ is also called the {\em max-power graph}.

The definition of the min-max-power graph is used to define the {\sc 2-Level Symmetric Range Assignment} problem as follows.

\problemConstruction{2-Level Symmetric Range Assignment (2LSRA)}{
A node set $V$ and two mappings $d_{\min}: V\to{\cal P}(V)$ and $d_{\max}: V\to{\cal P}(V)$ such that the max-power graph $G(V)$ is connected.}{Compute a set $U\subseteq V$ of minimum cardinality such that the min-max-power graph $G(U)$ is connected.}

The decision problem of finding a node set $U$ of size at most $k$, for an additionally given integer $k$, such that $G(U)$ is connected is NP-complete. This result is implicitly contained in Theorem 6.1 of \cite{CK07}, which proofs the NP-completeness of the corresponding asymmetric version.


\section{An approximation algorithm for {\sc 2LSRA}}

If the min-max-power graph $G(U)$ is connected for some set $U \subseteq V$ then $U$ has to contain at least one node of every connected component of the min-power graph $G(\emptyset)$. That is, the number $|CC(G(\emptyset))|$ of connected components of $G(\emptyset)$ is a lower bound for the size of such a set $U$. On the other hand, it is easy to find a set $U \subseteq V$ with at most $2(|CC(G(\emptyset))|-1)$ nodes such that $G(U)$ is connected. Such a set can be determined by a simple spanning tree algorithm. Let $H$ be the graph that has a node for every connected component of $G(\emptyset)$ and an edge between two nodes $C,C' \in CC(G(\emptyset))$ if there are nodes $u \in C, u' \in C'$ such that $u' \in d_{\max}(u)$ and $u \in d_{\max}(u')$. Let $T$ be a spanning tree for $H$. Then for every edge $\{C,C'\}$ of $T$ we can select two nodes $u \in C, v \in C'$ from different connected components of $G(\emptyset)$ such that for the set $U$ of all these selected nodes the min-max-power graph $G(U)$ is 
connected.

\begin{observation}
\label{O01}
Any solution for an instance of {\sc 2LSRA} contains at least \newline$|CC(G(\emptyset))|$ and at most $2(|CC(G(\emptyset))| - 1)$ nodes.
\end{observation}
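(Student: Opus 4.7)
The plan is to verify the two bounds separately. For the lower bound, I would argue that every valid $U$ must contain at least one node from each connected component of $G(\emptyset)$. The contrapositive is easier: suppose some $C \in CC(G(\emptyset))$ satisfies $U \cap C = \emptyset$ while $V \setminus C \neq \emptyset$. Then no edge of $G(U) = (V, E_{\min}(V) \cup E_{\max}(U))$ crosses the cut $(C, V \setminus C)$, because a min-power edge crossing this cut would contradict the maximality of $C$ as a min-power component, while a max-power edge requires both endpoints to lie in $U$ and no node of $C$ does. Since the components partition $V$, $G(U)$ is disconnected, yielding $|U| \geq |CC(G(\emptyset))|$ whenever there is more than one component.

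For the upper bound I would formalize the spanning-tree construction already sketched in the paragraph preceding the observation. Define the auxiliary graph $H$ on node set $CC(G(\emptyset))$ with an edge $\{C, C'\}$ whenever there exist $u \in C$, $u' \in C'$ with $\{u, u'\} \in E_{\max}(V)$. The first subclaim is that $H$ is connected, which follows from the standing assumption that $G(V)$ is connected: contracting each min-power component to a single node maps every path in $G(V)$ to a walk in $H$. Now fix any spanning tree $T$ of $H$; it has exactly $|CC(G(\emptyset))| - 1$ edges. For each edge $\{C, C'\}$ of $T$ choose a witness pair $(u, u') \in C \times C'$ with $\{u, u'\} \in E_{\max}(V)$, and let $U$ be the union of all witnesses, which gives $|U| \leq 2(|CC(G(\emptyset))| - 1)$.

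It remains to check that the constructed $U$ solves the instance. For any two nodes $x, y \in V$ an $x$-$y$ path in $G(U)$ can be assembled by first walking through min-power edges inside the component of $x$ to its tree-witness, then traversing the unique $T$-path between the two components via the max-power edges joining consecutive witnesses (these edges lie in $E_{\max}(U)$ by construction, since both endpoints were placed into $U$), and finally walking through min-power edges inside the component of $y$. I expect the only delicate point to be that the two witnesses chosen for two distinct tree edges incident to the same component need not coincide; this is precisely why the bound reads $2(|CC(G(\emptyset))| - 1)$ rather than $|CC(G(\emptyset))|$, and the min-power edges inside each component bridge any such mismatch.
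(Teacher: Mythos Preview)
Your proposal is correct and follows essentially the same approach as the paper: the paper's proof is the paragraph immediately preceding the observation, which gives exactly your lower-bound argument (each component must contribute a node to $U$) and your spanning-tree construction on the auxiliary component graph $H$ for the upper bound. Your write-up simply makes explicit the details the paper leaves to the reader---why no edge of $G(U)$ can cross the cut $(C,V\setminus C)$, why $H$ is connected, and how to thread a path in $G(U)$ through the tree witnesses---and you correctly flag the degenerate single-component case.
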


Our algorithm starts with the min-power graph $(V,E_0) := G(\emptyset)$ and an empty node set $U_0$ that is successively extended to node sets $U_i$  by adding node sets $M_i$ such that $(V,E_i)$ has less connected components than $(V,E_{i-1})$, where $E_i := E_{i-1} \cup E_{\max}(M_i)$. This is done until $(V,E_i)$ is connected, which implies that $G(U_i)$ is connected, because $(V,E_i)$ is a subgraph of $G(U_i)$. Obviously, a good choice for $M_i$ is a set for which the ratio
\[\frac{|CC((V,E_{i-1}))|-|CC((V,E_i))|}{|M_i|}\]
is high. For example, if each $M_i$ consists of two nodes then the number of connected components will be reduced by one at each extension and the algorithm computes a solution of size at most $2(|CC(G(\emptyset))| - 1)$. This is equivalent to a spanning tree solution. If each $M_i$ consists of three nodes and the number of connected components is reduced by two at each extension, then the algorithm computes a solution of size at most $\frac{3}{2}(|CC(G(\emptyset))| - 1)$.

\begin{definition}
A set of nodes $M \subseteq V$ of cardinality $k$ is called a {\em $k$-merging} for a graph $G=(V,E)$ and a mapping $d_{\max}:V \to {\cal P}(V)$, if
\begin{enumerate}
\item the graph $(M,E_{\max}(M))$ is connected, and
\item the $k$ nodes of $M$ are in $k$ different connected components of $G$.
\end{enumerate}
\end{definition}

If we add a $k$-merging $M_i$ to node set $U_{i-1}$ then $U_{i} = U_{i-1} \cup M_i$ and
\[\frac{|CC((V,E_{i-1}))|-|CC((V,E_i))|}{|M_i|} = \frac{k-1}{k}.\]

The approximation algorithm {\sc Approx2LSRA$_k$} shown in Figure \ref{approxK} has a fixed parameter $k \geq 2$. Starting with $k'=k$, it successively gathers $k'$-mergings as long as possible. After this it decreases $k'$ and proceeds in the same way until $(V,E_i)$ is connected, which will occur at the latest during the iteration for $k'=2$, where all remaining $2$-mergings are considered.

\begin{observation}
Algorithm {\sc Approx2LSRA$_k$} always finds a solution for an instance $V,d_{\min},d_{\max}$ of {\sc 2LSRA}.
\end{observation}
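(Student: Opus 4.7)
The plan is to reduce the claim to a single existence statement: whenever the current edge set $(V,E_i)$ is still disconnected, at least one $2$-merging exists. Once this is established, correctness follows immediately, because the final phase of the algorithm (with $k'=2$) keeps collecting $2$-mergings as long as they exist, and every added $k'$-merging strictly decreases $|CC((V,E_i))|$ by at least one.

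First I would argue termination by a potential argument: each iteration of the outer loop adds a $k'$-merging $M_i$ with $k'\ge 2$, and by definition of a $k'$-merging the $k'$ chosen nodes lie in $k'$ distinct components of $(V,E_{i-1})$ while $(M_i,E_{\max}(M_i))$ is connected. Hence after adding the max-power edges among $M_i$, these $k'$ components merge into one, so the number of connected components drops by exactly $k'-1\ge 1$. Since $|CC((V,E_0))|=|CC(G(\emptyset))|$ is finite, only finitely many iterations are possible.

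The main step is then the existence of a $2$-merging whenever $(V,E_i)$ is disconnected. For this I would pick any proper nonempty union $C$ of connected components of $(V,E_i)$ and consider the cut $(C,V\setminus C)$. By hypothesis the max-power graph $G(V)=(V,E_{\max}(V))$ is connected, so there must be some edge $\{u,v\}\in E_{\max}(V)$ with $u\in C$ and $v\in V\setminus C$. The set $M:=\{u,v\}$ then satisfies both conditions of the definition: $(M,E_{\max}(M))$ is a single edge and therefore connected, and $u,v$ lie in different connected components of $(V,E_i)$ by the choice of $C$. So $M$ is a valid $2$-merging, and the algorithm's $k'=2$ phase will select such an $M$ (or some other eligible $2$-merging) as long as disconnection persists.

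Combining these two points, the algorithm must halt with $(V,E_i)$ connected. Since $E_i\subseteq E_{\min}(V)\cup E_{\max}(U_i)$ by construction, $(V,E_i)$ is a spanning subgraph of $G(U_i)$, so $G(U_i)$ is connected as well, which yields a feasible solution for the given {\sc 2LSRA} instance. I do not expect any real obstacle here; the only subtlety is making sure the reader sees that the connectivity of the input max-power graph is exactly what guarantees the cut edge used in the $2$-merging argument.
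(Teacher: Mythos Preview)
Your proposal is correct and follows exactly the reasoning the paper sketches in the sentence preceding the observation (namely, that connectivity is reached at the latest during the $k'=2$ phase because $2$-mergings always exist while the graph is disconnected); the paper itself offers no formal proof beyond that remark. One small slip: in your termination paragraph you write ``each iteration of the outer loop adds a $k'$-merging,'' but it is the \emph{inner} loop that does so---the outer loop may simply decrement $k'$ without adding anything---though your overall argument is unaffected since you correctly rely on the $k'=2$ phase to force progress.
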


For every integer $k\in\mathbb{N}$ the algorithm {\sc Approx2LSRA$_k$} listed in Figure \ref{approxK} can be implemented such that its running time is in ${\mathcal O}(|V|^k)$, because the running time is dominated by the computation of all subsets $M\subseteq V$ with $|M|=k$ in line \ref{alg:lineChooseMerging}. Therefore we get a polynomial time algorithm for every constant integer $k$.

\begin{figure}[hbt]
\begin{algorithmic}[1]
\Function{Approx2LSRA$_k$}{$V,d_{\min},d_{\max}$}
\State $k'\gets k$
\State $i\gets 0$
\State $U_0 \gets \emptyset$
\State $E_0 \gets E_{\min}(V)$	
\While{$(V,E_i)$ is not connected}
  \While{there is a  $k'$-merging $M_i \subseteq V$ for $(V,E_i)$}	\label{alg:lineChooseMerging}
    \State $U_{i+1}\gets U_{i} \cup M_{i}$
    \State $E_{i+1}\gets E_{i} \cup E_{\max}(M_{i})$	\label{lineUpdateE}
    \State $i\gets i+1$
  \EndWhile
  \State $k'\gets k'-1$ \label{lineWhile}
\EndWhile
\State \Return $U_i$
\EndFunction
\end{algorithmic}\caption{Algorithm {\sc Approx2LSRA$_k$} for a fixed integer $k \geq 2$}\label{approxK}
\end{figure}

\section{Upper Bounds on the quality of {\sc Approx2LSRA$_k$}}\label{sectionAnalysis3}

Let $U_{OPT}(I)$ be an optimal solution for an instance $I=(V,d_{\min},d_{\max})$ of {\sc 2LSRA}. We show that {\sc Approx2LSRA$_k$} for a positive integer $k\ge 2$ computes a solution $U_k(I)$ such that
 \[\frac{|U_k(I)|}{|U_{OPT}(I)|} \quad \le \quad \frac{1}{k-1} + \sum_{i=1}^{k-1} \frac{1}{i^2}.\]

\begin{lemma}\label{L01}
Let $F_0=(V,E_0)$ be a forest with $n=|V|$ nodes and $m=|E_0|$ edges and let $p\in\mathbb{N}$, $1 \leq p \leq n-1$ and $l\in\mathbb{N}$ be positive integers.
\item Furthermore let $F_i = (V,E_i)$, $1\le i\le l$, be a sequence of forests such that $F_l$ contains only trees with less than $p$ edges and $E_i \subseteq E_{i-1}$, $|E_i| = |E_{i-1}|-p$.
\newline
If $m > \frac{n \cdot (p-1)}{p}$ then
\medskip
\begin{enumerate}
\item
  $F_0$ contains a tree with at least $p$ edges,
\item
each forest $F_i$,
\[i \quad < \quad \frac{1}{p} \cdot \left( m - \frac{n \cdot (p-1)}{p}\right),\]
contains a tree with at least $p$ edges and
\item
\[l \quad \geq \quad \left\lceil \frac{1}{p} \cdot \left( m - \frac{n \cdot (p-1)}{p}\right) \right\rceil.\]
\end{enumerate}
\end{lemma}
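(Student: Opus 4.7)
The plan is to establish the single key combinatorial fact first, and then derive all three assertions from it as immediate corollaries. The fact is: if $F=(V,E)$ is a forest on $n$ vertices in which every tree has at most $p-1$ edges, then $|E|\le n(p-1)/p$. From this the contrapositive gives: whenever a forest on $n$ vertices has strictly more than $n(p-1)/p$ edges, it must contain a tree with at least $p$ edges. This single contrapositive statement is exactly what all three items need.

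To prove the key fact, I would let the tree components of $F$ have $t_1,\dots,t_c$ vertices, so that $\sum_j t_j = n$ and $|E| = \sum_j (t_j-1) = n-c$. The assumption that every tree has at most $p-1$ edges is equivalent to $t_j\le p$ for all $j$, which forces $c \ge n/p$, and hence $|E| = n-c \le n - n/p = n(p-1)/p$. This is essentially a one-line counting argument; I do not expect any real obstacle here.

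For item (1), $F_0$ is a forest with $m$ edges and $m > n(p-1)/p$, so by the contrapositive some tree of $F_0$ has at least $p$ edges. For item (2), I would observe that each $F_i$ is a forest (it is a subgraph of the forest $F_{i-1}$) on the same vertex set $V$ with exactly $|E_i| = m-ip$ edges. The inequality $i < \frac{1}{p}(m - n(p-1)/p)$ is equivalent to $m - ip > n(p-1)/p$, so the contrapositive again yields a tree with at least $p$ edges in $F_i$.

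For item (3), I would apply the key fact directly to $F_l$: by hypothesis $F_l$ contains only trees with fewer than $p$ edges, so its $m-lp$ edges satisfy $m-lp \le n(p-1)/p$, which rearranges to
\[
l \;\ge\; \frac{1}{p}\left(m - \frac{n(p-1)}{p}\right).
\]
Since $l$ is an integer, this immediately upgrades to $l \ge \lceil \frac{1}{p}(m - n(p-1)/p) \rceil$, giving item (3). No step looks genuinely hard; the only care needed is to note that each $F_i$ inherits the forest property from $F_0$ through the chain $E_i\subseteq E_{i-1}$, so the key combinatorial fact is legitimately applicable to every $F_i$.
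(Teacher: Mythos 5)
Your proposal is correct and follows essentially the same route as the paper: the same component-counting (pigeonhole) argument shows that a forest on $n$ vertices whose trees all have at most $p-1$ edges has at most $n(p-1)/p$ edges, and items (1)--(3) follow as you describe. The only cosmetic difference is that you obtain item (3) by applying the key fact directly to $F_l$ rather than deducing it from item (2), which is a slightly cleaner phrasing of the same argument.
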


\begin{proof}
~
\begin{enumerate}
\item
Forest $F_0$ consists of $n-m$ trees. If $m > (n-m) \cdot (p-1)$, then at least one of these $n-m$ trees has more than $p-1$ edges and thus at least $p$ edges.
\[m > (n-m) \cdot (p-1) \quad \Leftrightarrow \quad m > \frac{n \cdot (p-1)}{p}\]

\item
Forest $F_i$ has $m-i \cdot p$ edges. If \[i < \frac{1}{p} \cdot \left( m - \frac{n \cdot (p-1)}{p}\right)\] then $F_i$ has more than $m - \frac{1}{p} \cdot \left( m - \frac{n \cdot (p-1)}{p}\right) \cdot p = \frac{n \cdot (p-1)}{p}$ edges and by Lemma \ref{L01}.1. at least one tree with $p$ edges.
\item
Follows from 2. and the fact that we still can remove at least $p$ more edges if there is a tree with $p$ edges.
\end{enumerate}\qed
\end{proof}

\begin{lemma}\label{lemma:mergings-size3}
Let $k\in\mathbb{N}, k\ge3$ be a positive integer and $U_{\text{OPT}}$ an optimal solution for an instance of {\sc 2LSRA}. Then {\sc Approx2LSRA$_k$} always finds at least 
\[\left\lceil \frac{1}{k-1} \cdot \left( (|CC(G(\emptyset))|-1) - \frac{|U_{\text{OPT}}| \cdot (k-2)}{k-1}\right) \right\rceil\]
$k$-mergings.
\end{lemma}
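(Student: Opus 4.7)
The plan is to apply Lemma \ref{L01} iteratively, using the optimum $U_{\text{OPT}}$ to certify that a $k$-merging continues to exist throughout the first phase of {\sc Approx2LSRA$_k$}. Write $c_i := |CC((V,E_i))|$, so $c_0 = |CC(G(\emptyset))|$ and $c_{i+1}=c_i-(k-1)$ whenever a $k$-merging $M_i$ is found. First I would observe that $(V,\,E_i\cup E_{\max}(U_{\text{OPT}}))$ is connected for every $i$, since $E_{\min}(V)\subseteq E_i$ and hence it contains $G(U_{\text{OPT}})$. Therefore the super-graph $H_i$ whose vertices are the components of $(V,E_i)$ and whose edges are the pairs $\{C,C'\}$ realized by some $\{u,v\}\in E_{\max}(U_{\text{OPT}})$ with $u\in C$, $v\in C'$ is connected and has $c_i$ nodes. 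Choose a spanning tree $T_i$ of $H_i$; for each of its $c_i-1$ super-edges pick one concrete realizing edge $\{u_e,v_e\}\in E_{\max}(U_{\text{OPT}})$, and let $T^{\text{lift}}_i$ be the graph on $S_i := \bigcup_e \{u_e,v_e\} \subseteq U_{\text{OPT}}$ with these $c_i-1$ edges.

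The central structural claim is that $T^{\text{lift}}_i$ is a forest on $|S_i|\le |U_{\text{OPT}}|$ nodes, and moreover any two vertices lying in the same connected component of $T^{\text{lift}}_i$ belong to distinct components of $(V,E_i)$. Both parts follow from the same projection argument: mapping each vertex of $S_i$ to its component in $(V,E_i)$ sends every edge of $T^{\text{lift}}_i$ to an edge of $T_i$, so either a cycle in $T^{\text{lift}}_i$ or a simple path in $T^{\text{lift}}_i$ between two vertices sharing a $(V,E_i)$-component would project to a nontrivial closed walk in the tree $T_i$; such a walk has to traverse some super-edge of $T_i$ twice, but since each super-edge of $T_i$ has only one chosen lift, the two corresponding edges in $T^{\text{lift}}_i$ would coincide, contradicting simplicity. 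I expect this projection argument to be the main conceptual step; everything else is essentially arithmetic bookkeeping.

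Granting the structural claim, Lemma \ref{L01}.1 applied to $T^{\text{lift}}_i$ with $p=k-1$ shows that whenever $c_i - 1 > |S_i|\cdot(k-2)/(k-1)$, the forest $T^{\text{lift}}_i$ contains a tree with at least $k-1$ edges; any subtree of it with exactly $k$ vertices and $k-1$ edges is then a $k$-merging of $(V,E_i)$, because those $k$ vertices lie in distinct components of $(V,E_i)$ and are connected via max-power edges. Since $|S_i|\le |U_{\text{OPT}}|$, the hypothesis is satisfied whenever $c_i - 1 > |U_{\text{OPT}}|\cdot(k-2)/(k-1)$. Substituting $c_i = |CC(G(\emptyset))| - i(k-1)$ shows this inequality persists for every non-negative integer $i$ strictly less than
\[\frac{1}{k-1}\cdot\left((|CC(G(\emptyset))|-1) - \frac{|U_{\text{OPT}}|\cdot(k-2)}{k-1}\right).\]
Consequently {\sc Approx2LSRA$_k$} finds a $k$-merging in each such iteration of its first phase, yielding at least the ceiling of this quantity many $k$-mergings in total, which is precisely the claimed bound.
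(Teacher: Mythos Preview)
Your argument is correct and follows the same high-level strategy as the paper: use the optimal solution $U_{\text{OPT}}$ to build a forest on at most $|U_{\text{OPT}}|$ vertices with $c_i-1$ edges whose trees are mergings for $(V,E_i)$, then invoke Lemma~\ref{L01} with $p=k-1$.

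The one noteworthy difference is in how the certifying forest is obtained. The paper constructs a single forest $F_0$ on $U_{\text{OPT}}$ from a spanning tree of the component super-graph of $G(\emptyset)$, and then \emph{maintains} a sequence $F_0,F_1,\ldots$ by deleting $k-1$ carefully chosen edges each time the algorithm picks a $k$-merging, preserving the invariant that every tree of $F_i$ is a merging for $(V,E_i)$; Lemma~\ref{L01}.3 is then applied once to the whole sequence. You instead \emph{rebuild} the lifted forest $T^{\text{lift}}_i$ from scratch at every step and apply Lemma~\ref{L01}.1 separately for each $i$. Your projection argument (a cycle or a path between same-component vertices in $T^{\text{lift}}_i$ would project to a closed walk in the tree $T_i$ forcing a repeated edge, contradicting uniqueness of lifts) cleanly establishes the structural claim and replaces the paper's somewhat terse invariant-maintenance step. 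The trade-off is that the paper gets the final count in one shot via Lemma~\ref{L01}.3, whereas you re-derive existence of a $k$-merging at each iteration and then do the arithmetic $c_i=c_0-i(k-1)$ by hand; both routes arrive at the same ceiling bound.
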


\begin{proof}
Let $M_0,\ldots,M_{l-1}$ be the $k$-mergings chosen by {\sc Approx2LSRA$_k$} in Line \ref{alg:lineChooseMerging} for a given instance $V,d_{\min},d_{\max}$ of {\sc 2LSRA} and $E_1,\ldots, E_{l}$ the edge sets computed in line \ref{lineUpdateE} starting with edge set $E_0$ of the min-power graph $G(\emptyset)$.

Let $H=(CC(G(\emptyset)),E_H)$ be the undirected graph that has a node for every connected component of $G(\emptyset)$ and an edge $\{C_1,C_2\}$ if and only if there is an edge $\{u,v\}$ in the min-max-power graph $G(U_{\text{OPT}})$ with $u \in C_1$ and $v \in C_2$.

For $i=0, \ldots, l$ we define a tree $T_i$ and a forest $F_i=(U_{\text{OPT}},E'_i)$ that satisfies the following invariant:
\begin{quote}
(I1) The node set $R$ of every tree of forest $F_i$, $0 \leq i \leq l$, is a $|R|$-merging for graph $(V,E_i)$.
\end{quote}

Let $T_0=(CC(G(\emptyset)),E_T)$ be a spanning tree of $H$ and $F_0$ be a forest that contains for every edge $\{C_1,C_2\} \in E_T$ exactly one edge $\{u,v\}$ of $G(U_{\text{OPT}})$ with $u \in C_1$ and $v \in C_2$. Invariant (I1) above obviously holds true for $F_0$.

For every $k$-merging $M_i=\{u_1,\ldots,u_k\}$, $0 \leq i < l$, we successively define trees $T_{i,1},\ldots,T_{i,k}$ and forests $F_{i,1},\ldots,F_{i,k}$ starting with $T_{i,1} := T_{i}$ and $F_{i,1} := F_{i}$ such that $T_{i+1} := T_{i,k}$ and $F_{i+1} := F_{i,k}$.

For $j = 2,\ldots,k$, tree $T_{i,j}$ and forest $F_{i,j}$ are defined by merging two nodes of tree $T_{i,j-1}$ and removing one edge from forest $F_{i,j-1}$, respectively. 

Let $C^*$ be the node of $T_{i,j-1}$ that contains the nodes $u_{1},\ldots,u_{j-1}$ and let $C$ be the node of $T_{i,j-1}$ that contains $u_j$. 
Choose any edge $\{C',C''\}$ from the path between $C^*$ and $C$ in $T_{i,j-1}$, replace the two nodes $C^*,C$ by one new node $C^* \cup C$ in $T_{i,j-1}$, and remove edge $\{u,v\}$ with $u \in C'$ and $v \in C''$ from $F_{i,j-1}$.

If Invariant (I1) holds true for $F_i$, then it holds true for $F_{i+1}$, because the construction above guarantees that for every simple path $v_1, \ldots ,v_m$ in $F_{i+1}$ the nodes $v_i$, $1\le i\le m$, are in $m$ different connected components of $(V,E_{i+1})$. 

Since forest $F_0$ has $|U_{\text{OPT}}|$ nodes and $|CC(G(\emptyset))|-1$ edges, by Lemma \ref{L01}.3, algorithm {\sc Approx2LSRA$_k$} finds at least 
\[\left\lceil \frac{1}{k-1} \cdot \left( (|CC(G(\emptyset))|-1) - \frac{|U_{\text{OPT}}| \cdot (k-2)}{k-1}\right) \right\rceil\]
$k$ mergings.
\qed
\end{proof}

\begin{theorem}\label{thmUpper}
 Let $I = (V,d_{\min},d_{\max})$ be an instance of {\sc 2LSRA} and $U_{OPT}(I)$ an optimal solution. For a fixed integer $k\ge 2$, algorithm {\sc Approx2LSRA$_k$} computes a solution $U_k(I)$ such that
 \[\frac{|U_k(I)|}{|U_{OPT}(I)|} \quad \le \quad \frac{1}{k-1} + \sum_{i=1}^{k-1} \frac{1}{i^2}.\]
\end{theorem}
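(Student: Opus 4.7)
The plan is to extend the witness--forest construction from the proof of Lemma~\ref{lemma:mergings-size3} across every phase of {\sc Approx2LSRA$_k$} and to bound the total number of selected nodes by a telescoping sum over phases.

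Fix notation: let $c := |CC(G(\emptyset))|$ and $U := |U_{OPT}(I)|$, and for $j = 2, \ldots, k$ let $n_j$ denote the number of $j$-mergings chosen by the algorithm during the phase with $k' = j$. Because each $j$-merging reduces the number of connected components by exactly $j-1$ and the algorithm stops the instant the graph is connected, $\sum_{j=2}^{k}(j-1)\,n_j = c-1$ and $|U_k(I)| = \sum_{j=2}^{k} j\, n_j$.

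I would re-use the forest $F_0 = (U_{OPT}, E'_0)$ from the proof of Lemma~\ref{lemma:mergings-size3}, which starts with $c-1$ edges, and apply its edge-removal procedure to \emph{every} merging chosen by {\sc Approx2LSRA$_k$} (not only to the $k$-mergings handled there). Each $j$-merging consumes exactly $j-1$ edges of the forest, so writing $e_j$ for the number of edges remaining at the end of phase $j$, one has $e_{k+1} = c-1$, $e_2 = 0$, and $n_j = (e_{j+1}-e_j)/(j-1)$. By invariant~(I1), at the end of phase $j$ no tree of the forest has $\ge j$ nodes---otherwise the phase would not have ended---equivalently no tree has $\ge j-1$ edges. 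Lemma~\ref{L01}.1 applied contrapositively with $p = j-1$ and $n = U$ then gives the key bound
\[e_j \;\le\; \frac{U\,(j-2)}{j-1}, \qquad j = 2, 3, \ldots, k.\]

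The final step is an Abel-summation computation: combining $|U_k(I)| = \sum_{j=2}^{k}\frac{j}{j-1}(e_{j+1} - e_j)$ with the telescoping identity $\frac{j-1}{j-2} - \frac{j}{j-1} = \frac{1}{(j-1)(j-2)}$ yields
\[|U_k(I)| \;=\; \frac{k}{k-1}(c-1) \;+\; \sum_{j=3}^{k}\frac{e_j}{(j-1)(j-2)}.\]
Substituting the bound on $e_j$ (so that the generic summand is at most $U/(j-1)^2$), applying Observation~\ref{O01} in the form $c-1 \le U$, and using the elementary rewrites $\frac{k}{k-1} = 1 + \frac{1}{k-1}$ and $\frac{1}{1^2} = 1$ collapses the right-hand side to $U\bigl(\tfrac{1}{k-1} + \sum_{i=1}^{k-1} 1/i^2\bigr)$, which is the claimed inequality. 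The edge case $k=2$ is immediate from $|U_2(I)| = 2(c-1) \le 2U$.

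The main obstacle is the first step: the forest-update rule of Lemma~\ref{lemma:mergings-size3} is described explicitly only for $k$-mergings in the initial phase, and one has to argue that exactly the same local operation---pick any $F$-edge on the path between the two relevant super-nodes of the current auxiliary tree and delete it---continues to preserve invariant~(I1) when the algorithm picks a $j$-merging in a later phase $k' = j < k$. Once that is granted, the edge-count bound at each phase boundary is a direct consequence of Lemma~\ref{L01}.1, and the remainder of the proof is a mechanical algebraic manipulation.
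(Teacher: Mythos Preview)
Your proposal is correct and arrives at exactly the same key inequality and the same telescoping computation as the paper: in the paper's notation your edge count $e_j$ is precisely $c_{j-1}-1$, your bound $e_j \le U(j-2)/(j-1)$ is the paper's inequality~(\ref{equationCLE}), and your Abel summation is the same sum the paper evaluates line by line.

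The one structural difference is in how that key inequality is obtained. You maintain a \emph{single} witness forest across all phases and must therefore verify (your ``main obstacle'') that the edge-deletion rule from the proof of Lemma~\ref{lemma:mergings-size3} continues to preserve invariant~(I1) for $j$-mergings with $j<k$; this is true, since nothing in that local operation uses the size of the merging. The paper instead avoids this issue altogether: at the start of each phase~$i$ it forms a fresh sub-instance $I_{s(i)}$ by absorbing the already-added max-power edges into $d_{\min}$, applies Lemma~\ref{lemma:mergings-size3} to that sub-instance as a black box, and then uses the easy observation $|U_{OPT}(I_{s(i)})| \le |U_{OPT}(I)|$. So the paper's route is slightly more modular (no need to re-open the proof of Lemma~\ref{lemma:mergings-size3}), while yours is slightly more direct (one forest, no auxiliary sub-instances); the resulting bound and the arithmetic are identical.
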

\begin{proof}
 Let $l_i$, $3\le i\le k$, denote the number of $i$-mergings chosen by {\sc Approx2LSRA$_k$}. Furthermore, let $c_k := |CC(G(\emptyset))|$ be the number of connected components of the min-power graph and let $c_i$, $2 \le i \le k-1$, be the number of connected components before {\sc Approx2LSRA$_k$} searches for $i$-mergings for the first time, that is $c_i := |CC((V,E_{s(i)}))|$ with $s(i):=\sum_{j=i+1}^{k}l_j = l_k+\ldots + l_{i+1}$
 . Then we know for $3\le i\le k$ that
 \begin{equation}
    c_{i-1} = c_{i} - (i-1)l_{i} \quad \Leftrightarrow \qquad l_i = \frac{c_i - c_{i-1}}{i-1}.\label{equationCL}
 \end{equation}
 Let $d_{\min,s(i)}: V\to{\cal P}(V)$ be defined by $u\in d_{\min,s(i)}(v)$ if and only if $\{u,v\}\in E_{s(i)}$ for all $u,v\in V$. Then $(V,E_{s(i)})$ is the min-power graph of instance $I_{s(i)} := (V,d_{\min,s(i)},d_{\max})$. Now we can apply Lemma \ref{lemma:mergings-size3} for $k=i$ on instance  $I_{s(i)}$ and get
 \begin{align}
     \qquad \frac{c_i - c_{i-1}}{i-1}	\quad &\ge \quad \frac{1}{i-1} \left(c_i - 1 - \frac{i-2}{i-1}\cdot |U_{OPT}(I_{s(i)})|\right)		\notag\\
    \Leftrightarrow  \qquad c_{i-1} - 1 \quad &\le \quad \frac{i-2}{i-1}\cdot |U_{OPT}(I_{s(i)})|	 \quad \le \quad  \frac{i-2}{i-1} \cdot |U_{OPT}(I)|	,	\label{equationCLE}
 \end{align}
 because $|U_{OPT}(I_{s(i)})| \le |U_{OPT}(I)|$. We can derive an upper bound for $|U_k(I)|$ as follows.
 \begin{align*}
  |U_k(I)| &\quad\le\quad \sum_{i=3}^{k}i\cdot l_i + 2(c_2-1) \quad\stackrel{(\ref{equationCL})}{=}\quad \sum_{i=3}^{k}i\cdot l_i + 2\left(c_k-1-\sum_{i=3}^{k}(i-1)l_i\right) 	\\
	  &\quad=\quad 2(c_k-1) - \sum_{i=3}^{k}(i-2)l_i \quad\stackrel{(\ref{equationCL})}{=}\quad 2(c_k-1) - \sum_{i=3}^{k}\frac{i-2}{i-1}\left(c_i - c_{i-1}\right) 	\\
	  &\quad=\quad 2(c_k-1) + \sum_{i=3}^{k}\frac{i-2}{i-1}\left(c_{i-1}-1\right) - \sum_{i=3}^{k}\frac{i-2}{i-1}\left(c_i-1\right)	\\
	  &\quad=\quad 2(c_k-1) + \frac{c_2-1}{2} - \frac{(k-2)(c_k-1)}{k-1} + \sum_{i=3}^{k-1}\left(\frac{i-1}{i}-\frac{i-2}{i-1}\right)\left(c_{i}-1\right) \\
	  &\quad=\quad \left(2-\frac{k-2}{k-1}\right)\cdot(c_k-1) + \frac{c_2-1}{2} + \sum_{i=3}^{k-1}\frac{c_{i}-1}{i(i-1)} \\
	  &\quad\stackrel{(\ref{equationCLE})}{\le}\quad \left(2-\frac{k-2}{k-1}\right)\cdot(c_k-1) + \frac{|U_{OPT}(I)|}{4} + \sum_{i=3}^{k-1}\frac{|U_{OPT}(I)|}{i^2}
 \end{align*}
Since $c_k-1 < |U_{OPT}(I)|$, see Observation \ref{O01}, we get
 \begin{align*}
  |U_k(I)| &< \left(2-\frac{k-2}{k-1} + \frac{1}{4} + \sum_{i=3}^{k-1}\frac{1}{i^2}\right)	\cdot|U_{OPT}(I)| = \left(\frac{1}{k-1}+ \sum_{i=1}^{k-1}\frac{1}{i^2}\right)	\cdot|U_{OPT}(I)|.
 \end{align*}\qed
\end{proof}
\begin{corollary}
{\sc Approx2LSRA$_3$} is a $7/4$ factor approximation algorithm for {\sc 2LSRA}.
\end{corollary}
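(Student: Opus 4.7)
The plan is to derive the corollary as a direct specialization of Theorem~\ref{thmUpper} with $k=3$. That theorem already asserts that {\sc Approx2LSRA$_k$} achieves
\[\frac{|U_k(I)|}{|U_{OPT}(I)|} \le \frac{1}{k-1} + \sum_{i=1}^{k-1} \frac{1}{i^2}\]
for every fixed integer $k \ge 2$, so the remaining work reduces to a purely numerical substitution together with a sanity check that $k=3$ is admissible.

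Substituting $k=3$, I would evaluate
\[\frac{1}{3-1} + \sum_{i=1}^{2} \frac{1}{i^2} \;=\; \frac{1}{2} + 1 + \frac{1}{4} \;=\; \frac{7}{4},\]
which matches the claimed ratio exactly. Combined with the earlier observation that {\sc Approx2LSRA$_k$} always returns a valid solution whenever the max-power graph is connected, and with the $\mathcal{O}(|V|^3)$ running time noted right after Algorithm~{\sc Approx2LSRA$_k$}, this immediately yields that {\sc Approx2LSRA$_3$} is a polynomial-time $7/4$-approximation algorithm for {\sc 2LSRA}.

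There is essentially no obstacle to overcome: all the substantive work has already been done by Lemma~\ref{L01}, Lemma~\ref{lemma:mergings-size3}, and the telescoping estimate inside the proof of Theorem~\ref{thmUpper}. The only points to verify are formal, namely that the sum in the theorem is well-defined for $k=3$ (it runs from $i=1$ to $i=2$) and that the hypothesis $k \ge 2$ of the theorem is satisfied, both of which are trivial. Hence the proof of the corollary is a one-line application of Theorem~\ref{thmUpper}.
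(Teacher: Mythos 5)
Your proof is correct and matches the paper's (implicit) argument exactly: the corollary follows by substituting $k=3$ into the bound of Theorem~\ref{thmUpper}, giving $\frac{1}{2}+1+\frac{1}{4}=\frac{7}{4}$. Nothing further is needed.
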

\begin{observation}
  The upper bound in Theorem \ref{thmUpper} tends to $\pi^2/6$ for $k\to\infty$. \cite{A83}
\end{observation}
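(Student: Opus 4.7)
The plan is to split the upper bound expression into its two natural summands and treat them separately. The first summand $\frac{1}{k-1}$ is trivially a null sequence, so it contributes nothing in the limit. The real content lies in recognizing the partial sums $\sum_{i=1}^{k-1} \frac{1}{i^2}$ as the partial sums of the famous Basel series $\sum_{i=1}^{\infty} \frac{1}{i^2}$.

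First, I would argue convergence: the series $\sum_{i=1}^{\infty} \frac{1}{i^2}$ converges absolutely (for instance by comparison with the telescoping series $\sum_{i\ge 2} \frac{1}{i(i-1)}$, or via the integral test with $\int_1^\infty x^{-2}\,dx = 1$). Hence $\lim_{k\to\infty}\sum_{i=1}^{k-1}\frac{1}{i^2}$ exists as a finite real number. Next, I would invoke Euler's classical evaluation of this limit, namely $\sum_{i=1}^{\infty} \frac{1}{i^2} = \pi^2/6$, which is the point of the citation \cite{A83}; no rederivation is needed here, as it is one of the most well-known identities in analysis.

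Combining these two facts with the limit law for sums, I would conclude
\[\lim_{k\to\infty}\left(\frac{1}{k-1} + \sum_{i=1}^{k-1}\frac{1}{i^2}\right) = 0 + \frac{\pi^2}{6} = \frac{\pi^2}{6},\]
which is exactly the claim of the observation. There is no genuine obstacle in this proof — the only substantive ingredient is the Basel identity, which is cited rather than proved, and the remaining manipulations are an immediate application of elementary limit rules. The argument should fit in just a few lines.
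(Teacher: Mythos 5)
Your proposal is correct and matches the paper's intent exactly: the paper offers no proof beyond citing the Basel identity, and your argument (the $\frac{1}{k-1}$ term vanishes, the partial sums converge to $\sum_{i=1}^{\infty} 1/i^2 = \pi^2/6$ by the cited classical result) is the immediate and intended justification. Nothing further is needed.
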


\section{Lower Bounds on the quality of {\sc Approx2LSRA$_k$}}
In this section we present worst case examples for algorithm {\sc Approx2LSRA$_k$} and derive lower bounds on its quality. For $k=3$ we demonstrate that the upper bound of $7/4$ obtained in section \ref{sectionAnalysis3} is tight.
\begin{theorem}\label{theorem-worst-case}
 Let $k\in\mathbb{N}$, $k\ge 3$ be a positive integer. For an instance $I$ of {\sc 2LSRA} let $U_k(I)$ be the solution computed by {\sc Approx2LSRA$_k$} and $U_{OPT}(I)$ an optimal solution.
 For all $\epsilon > 0$ there is an instance $I$ such that
 \[\frac{|U_k(I)|}{|U_{OPT}(I)|} > \frac{3k-2}{2k-2} - \epsilon \text{.}\]
\end{theorem}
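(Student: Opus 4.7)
The plan is to construct a parametrized family of instances $\{I_t\}_{t\ge 1}$ whose worst-case ratio tends to $(3k-2)/(2k-2)$ from below, matching the tight case identified by the upper bound analysis. First I would reverse-engineer Theorem~\ref{thmUpper}: for the bound $\frac{1}{k-1} + \sum_{i=1}^{k-1}\frac{1}{i^2}$ (and, in particular, $7/4$ when $k=3$) to be nearly attained, one needs $|CC(G(\emptyset))| \approx |U_{OPT}|$, the greedy should pick exactly the minimum number $l_k = \lceil \tfrac{1}{k-1}(|CC(G(\emptyset))|-1 - \tfrac{k-2}{k-1}|U_{OPT}|)\rceil$ of $k$-mergings guaranteed by Lemma~\ref{lemma:mergings-size3} and no $j$-merging for $2 < j < k$, and each subsequent 2-merging should contribute two fresh nodes. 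These bookkeeping constraints dictate the shape of $I_t$: roughly $N(t) = 2(k-1)t + O(1)$ components, $|U_{OPT}(I_t)| = N(t)$, and $|U_k(I_t)| = (3k-2)t - O(1)$.

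Second, I would build $I_t$ as a chain of $t$ identical \emph{gadgets}, linked by paired bridge components, plus $O(1)$ boundary pieces. Each gadget contains (i) a ``trap'' cluster of $k$ min-power components whose max-power edges form a $k$-clique, dedicated to being captured as a $k$-merging by {\sc Approx2LSRA$_k$}, and (ii) $k-2$ ``spacer'' components along an inter-gadget path. The trap nodes are chosen \emph{disjoint} from the nodes that participate in the spanning max-power path used by $U_{OPT}$, so that trap nodes that greedy adds are pure overhead. The inter-gadget bridges are made \emph{doubled}: each bridge component carries two distinct max-power endpoints, one with an edge to the preceding current component and one to the succeeding one, and no max-power edge between these two endpoints. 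This forbids $\{u,v,w\}$ with $u,v$ in the same bridge component and rules out any 3-merging that would span a bridge after the trap phase, while letting each greedy 2-merging introduce two fresh endpoints.

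Third, I would verify the behavior of {\sc Approx2LSRA$_k$} on $I_t$ by induction over the iterations: (a)~the only $k$-mergings in $G(\emptyset)$ are the $t$ clique traps, and they are pairwise disjoint, contributing $kt$ new nodes; (b)~after these mergings, the remaining max-power edges between different current components form only the isolated bridge edges described above, so no $j$-merging for $j\ge 3$ exists and the algorithm drops through $k' = k-1, k-2, \ldots$ down to $k' = 2$; (c)~the 2-merging phase then forms a spanning tree on the $(k-1)t + O(1)$ current components in which each tree edge coincides with a distinct doubled-bridge edge, so each 2-merging contributes exactly two fresh nodes, totaling $2(k-1)t - O(1)$ further nodes. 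Summing yields $|U_k(I_t)| = (3k-2)t - O(1)$. Meanwhile $U_{OPT}$ is exhibited explicitly: one path node per component, forming a max-power spanning path through the doubled bridges; this is feasible by aligning the bridge endpoints with the path so that the two bridge nodes of each bridge component double as path-cover nodes for that single component.

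Finally, dividing gives $|U_k(I_t)|/|U_{OPT}(I_t)| = (3k-2)/(2k-2) - O(1/t)$, which exceeds $(3k-2)/(2k-2) - \epsilon$ for $t$ large enough. The main obstacle is the simultaneous fulfillment of the conflicting design goals in steps~(b) and the construction of $U_{OPT}$: splitting each bridge into two independent endpoints is what blocks {\sc Approx2LSRA$_k$} from shortcutting with a cheap cross-bridge $k$-merging, but it naively inflates $|U_{OPT}|$ above $N$. The resolution is to put both bridge endpoints inside a single min-power component so that they count as covers for the same component in $U_{OPT}$'s spanning path, thereby keeping $|U_{OPT}| = N$ while still forcing the greedy through a long sequence of wasteful 2-mergings.
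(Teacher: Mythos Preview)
Your high-level plan---build a family $I_t$ whose worst-case ratio tends to $(3k-2)/(2k-2)$, by arranging $t$ ``wasted'' $k$-mergings and then $(k-1)t$ fresh $2$-mergings against an optimum of size $2(k-1)t$---is exactly the paper's strategy, and your arithmetic is right. Two points, one a simplification and one a genuine gap, separate your sketch from a proof.

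\emph{Simplification.} The paper's instance is a \emph{star}, not a chain. There is a single central min-power component (the hub $(0,0,0)$ together with all $(d,3,0)$), and each of the $t$ arms carries a row of $k-1$ two-node components plus $k-1$ pendant singletons. The ``trap'' $k$-merging of arm $d$ is the row-$3$ path $(d,3,0),(d,3,1),\dots,(d,3,k-1)$; after all $t$ traps everything except the pendant singletons sits in one component, so trivially no $j$-merging with $j\ge 3$ survives, and the remaining $(k-1)t$ singletons each cost a disjoint $2$-merging. Crucially, the paper does \emph{not} try to make the traps the only $k$-mergings in $G(\emptyset)$: the row-$2$ paths $(0,0,0),(d,2,1),\dots,(d,2,k-1)$ are $k$-mergings too, and indeed they are the ones $U_{\mathrm{OPT}}$ uses. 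A lower bound only needs \emph{some} execution of the nondeterministic greedy to be bad, so it suffices that the traps are an admissible choice at every step.

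\emph{Gap.} Your ``doubled bridge'' resolution does not deliver $|U_{\mathrm{OPT}}|=N$. If a bridge component $B$ has two distinct endpoints $l,r$ with $l$ adjacent only to the predecessor and $r$ only to the successor (and no max-power edge $\{l,r\}$), then for $G(U_{\mathrm{OPT}})$ to connect $B$ to both neighbours you need \emph{both} $l$ and $r$ in $U_{\mathrm{OPT}}$, i.e.\ two optimum nodes for that one component. With $\Theta(t)$ such bridges this inflates $|U_{\mathrm{OPT}}|$ by a linear term and the limiting ratio drops strictly below $(3k-2)/(2k-2)$. Putting $l,r$ in the same min-power component does not help: symmetry of $E_{\max}(U)$ still requires both endpoints to be in $U_{\mathrm{OPT}}$ for the two outgoing max-power edges to exist. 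The fix (which is exactly what the star construction does implicitly) is to let the components that need two distinct exit nodes after the trap phase be \emph{trap} components, so that both exit nodes lie in the merged block $T_i$ and the optimum still pays only one node per original component; the extra ``spacers'' are then leaves hanging off $T_i$, not through-connectors.
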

\begin{proof}
  \begin{figure}[b]
  \null\hfill
  \includegraphics[width=\textwidth]{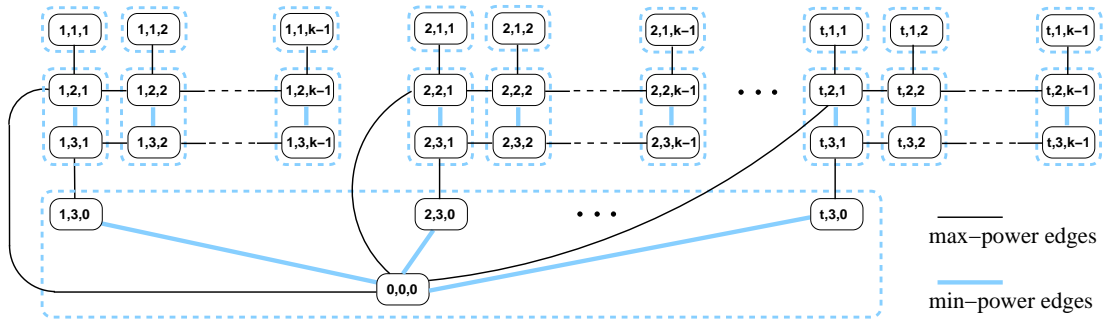}
  \hfill\null
  \caption{Instance $I_t$ of Theorem \ref{theorem-worst-case}: The min-power edges $E_{\min}(V)$ are drawn as thick blue lines, the max-power edges $E_{\max}(V)$ are drawn as thin black lines and the connected components of the min-power graph $G(\emptyset)$ are encircled by dashed boxes.}
  \label{approx-worst-case}
  \end{figure}
 For a positive integer $t\in\mathbb{N}$ let $I_t$ be the instance for {\sc 2LSRA} defined as follows (see Figure \ref{approx-worst-case}).
 \begin{align*}
  V := & \{(0,0,0)\} \quad \cup \quad \{(d,3,0) ~|~ 1\le d\le t\} 	\\
		& \cup \quad \{(d,r,c) ~|~ 1\le d\le t, 1\le r\le 3, 1\le c < k\}	\\
  E_{\min}(V) := & \{\{(0,0,0),(d,3,0)\} ~|~ 1\le d\le t\} 	\\
		 & \cup \{\{(d,2,c),(d,3,c)\} ~|~ 1\le d\le t, 1\le c<k\}	\\
  E_{\max}(V) := & E_{\min}(V) \quad \cup \quad \{\{(0,0,0),(d,2,1)\} ~|~ 1\le d\le t\} \\
		  & \cup \quad \{\{(d,r,c),(d,r,c+1)\} ~|~ 1\le d\le t, r\in\{2,3\}, 1\le c<k-1\}	\\
		  & \cup \quad \{\{(d,3,0),(d,3,1)\} ~|~ 1\le d\le t\}	\\
		  & \cup \quad \{\{(d,1,c),(d,2,c)\} ~|~ 1\le d\le t, 1\le c< k\}
 \end{align*}
  The (unique) optimal solution for $I_t$ is 
  \[U_{OPT}(I_t) = \{(0,0,0)\} \cup \{(d,r,c) ~|~ 1\le d\le t, r\in\{1,2\}, 1\le c<k\},\]
  meaning that $|U_{OPT}(I_t)| = 1 + 2(k-1)t$. Algorithm {\sc Approx2LSRA$_k$}, in the worst case, successively gathers all $k$-mergings $\{(d,3,c) ~|~ 0\le c < k\}$ for $1\le d\le t$ first, followed by the remaining $2$-mergings $\{(d,1,c),(d,2,c)\}$ for $1\le d\le t$ and $1\le c<k$. Thus we get
   \[q(k,t) := \frac{|U_k(I_t)|}{|U_{OPT}(I_t)|} = \frac{kt + 2(k-1)t}{1 + 2(k-1)t} = \frac{k + 2(k-1)}{\frac{1}{t} + 2(k-1)} 
   = \frac{3k-2}{\frac{1}{t} + 2k-2}\]
  and $\lim_{t \to \infty}(q(k,t)) = \frac{3k-2}{2k-2}$ 
  , which implies the existence of a positive integer $t_{\epsilon}\in\mathbb{N}$ such that $q(k,t_{\epsilon}) > \frac{3k-2}{2k-2} - \epsilon$.
\qed
\end{proof}
\begin{corollary}
 The upper bound of $7/4$ on the quality of {\sc Approx2LSRA$_3$} is tight.
\end{corollary}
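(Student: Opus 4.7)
The corollary is a direct consequence of the preceding Theorem \ref{theorem-worst-case} combined with the upper bound from Theorem \ref{thmUpper} (which for $k=3$ already yielded the $7/4$ approximation bound stated in the earlier Corollary). My plan is simply to observe that these two bounds match at $k=3$.

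First, I would substitute $k=3$ into the expression $\frac{3k-2}{2k-2}$ from Theorem \ref{theorem-worst-case}, obtaining $\frac{7}{4}$. Theorem \ref{theorem-worst-case} then guarantees that for every $\epsilon > 0$ there is an instance $I$ of {\sc 2LSRA} for which
\[\frac{|U_3(I)|}{|U_{OPT}(I)|} \;>\; \frac{7}{4} - \epsilon.\]
Hence the worst-case approximation ratio of {\sc Approx2LSRA$_3$} is at least $\frac{7}{4}$.

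Second, I would invoke the earlier Corollary (which specializes Theorem \ref{thmUpper} to $k=3$, using $\frac{1}{k-1}+\sum_{i=1}^{k-1}\frac{1}{i^2} = \frac{1}{2}+1+\frac{1}{4} = \frac{7}{4}$) to note that $|U_3(I)|/|U_{OPT}(I)| \le 7/4$ holds for every instance $I$. Matching the upper bound of $7/4$ with the family of instances $I_t$ from the proof of Theorem \ref{theorem-worst-case} (whose ratios $q(3,t) = 7/(1/t + 4)$ approach $7/4$ as $t\to\infty$) establishes that the bound is tight and cannot be improved.

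There is essentially no obstacle here, since both the matching upper bound and the limiting worst-case construction are already in hand; the corollary merely records the coincidence of the two bounds at $k=3$. \qed
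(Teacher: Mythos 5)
Your proposal is correct and is exactly the argument the paper intends: the corollary follows immediately by setting $k=3$ in Theorem \ref{theorem-worst-case} to get a lower bound of $7/4-\epsilon$ for every $\epsilon>0$, matching the $7/4$ upper bound from Theorem \ref{thmUpper}. The paper states this corollary without proof precisely because it is this direct combination of the two results.
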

\section{Efficient Implementation of {\sc Approx2LSRA$_3$}}

For $k=3$ the algorithm can be implemented in almost linear time using a union-find data structure $D$ to organize the node sets of the connected components of $(V,E_i)$. 

We assume that the mappings $d_{\min}$ and $d_{\max}$ are explicitly given as relations $d_{\min} \subset V\times {\cal P}(V)$ and $d_{\max} \subset V\times {\cal P}(V)$ of size $s_{\min} := |V| + \sum_{v\in V} |d_{\min}(v)|$ and $s_{\max} := |V| + \sum_{v\in V} |d_{\max}(v)|$, respectively. The implementation is based on the following three steps.
\begin{enumerate}
 \item {\em Initialization}: Generate the set of min-power edges $E_{\min}(V)$ and the set of max-power edges $E_{\max}(V)$. This can be done in time ${\mathcal O}(s_{\min})$ and ${\mathcal O}(s_{\max})$, respectively. Afterwards $D$ can be initialized with the connected components of $G(\emptyset)$ during one iteration over the min-power edges by performing two find operations $u'=find(u)$, $v'=find(v)$ and one union operation $union(u',v')$ for each $\{u,v\}\in E_{\min}(V)$.
 
 The total number of $find$ and $union$ operations for this step is linear in $|V|$ and $|E_{\min}(V)|$ and therefore also linear in $s_{\min}$.
 \item {\em Finding $3$-mergings:} For every node $v\in V$ examine all incident edges $\{v,u\}\in E_{\max}(V)$ after identifying the connected component of $v$ via $C_v = find(v)$. If $C_u := find(u) \not= C_v$, node $u$ and its component $C_u$ are temporarily saved until a second node $u'$ adjacent to $v$ is found, such that $C_v \not= C_{u'}$ and $C_u \not= C_{u'}$. Then $\{v,u,u'\}$ is a $3$-merging that is added to the solution and the three connected components $C_v$, $C_u$ and $C_{u'}$ are merged via two union operations. 
 
 Since every edge in $E_{\max}(V)$ has to be considered only twice, the total number of $find$ and $union$ operations for this step is linear in $|V|$ and $|E_{\max}(V)|$ and therefore also linear in $s_{\max}$.
 \item {\em Finding $2$-mergings:} For every edge $\{u,v\}\in E_{\max}(V)$ add $\{u,v\}$ to the solution and call $union(u,v)$, if $find(u) \not= find(v)$. Again, the total number of $find$ and $union$ operations for this step is linear in $s_{\max}$.
\end{enumerate}
\begin{theorem}
 {\sc Approx2LSRA$_3$} can be implemented such that the running time is in $\mathcal{O}(f(s_{\min},s_{\max})\cdot \alpha(f(s_{\min},s_{\max}),|V|))$ where $\alpha$ is the inverse Ackermann function and $f\in {\mathcal O}(s_{\min}+s_{\max})$.
\end{theorem}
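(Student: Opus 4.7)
The plan is to prove the theorem by verifying two things: (a) the three-step implementation actually realizes \textsc{Approx2LSRA}$_3$ in the sense that after step~2 no $3$-merging remains and after step~3 the graph $(V,E_i)$ is connected, and (b) the total number of union-find operations lies in $\mathcal{O}(s_{\min}+s_{\max})$, so that the classical amortized bound $\mathcal{O}(\alpha(m,n))$ per operation of Tarjan's union-find yields the claimed running time.

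For correctness, steps~1 and~3 are immediate: initialization of $D$ with $\{\mathit{find},\mathit{union}\}$ on every min-power edge faithfully produces the connected components of $G(\emptyset)$, and step~3 is literally a greedy search for $2$-mergings along the max-power edges, which is valid because any $2$-merging is an edge of $E_{\max}(V)$ joining two different components. The delicate part is step~2. I would formalize the following claim: after step~2 has processed every node once, no $3$-merging exists in the current graph $(V,E_i)$. To prove this I would argue by contradiction: suppose $\{a,b,c\}$ is a $3$-merging at the end of step~2. Since $(M,E_{\max}(M))$ is connected on three nodes, it is either a path or a triangle; in either case there is some node (say $b$) adjacent in $E_{\max}(V)$ to both of the others. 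Because components only ever merge, $a$, $b$, and $c$ lay in three distinct components at every point in time, in particular at the instant when the outer loop reached $b$. At that moment the scan over the max-power edges incident to $b$ would have encountered a neighbor $u$ with $\mathit{find}(u)\ne\mathit{find}(b)$ and subsequently a neighbor $u'$ with $\mathit{find}(u')\ne\mathit{find}(b)$ and $\mathit{find}(u')\ne\mathit{find}(u)$, whereupon $b$'s component would have been unioned with those of $u$ and $u'$, placing $b$ in the same component as at least one of $a,c$ from then on, a contradiction. This is the step I expect to be the main obstacle, because one has to rule out all orderings in which intermediate merges might absorb $a$, $b$, or $c$ too early.

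For the running time, I would bound each step separately. Step~1 builds the explicit edge sets $E_{\min}(V)$ and $E_{\max}(V)$ from the relational encodings of $d_{\min}$ and $d_{\max}$ in time $\mathcal{O}(s_{\min}+s_{\max})$, and issues at most $2|E_{\min}(V)|+|E_{\min}(V)|\in\mathcal{O}(s_{\min})$ union-find operations. Step~2 iterates over every $v\in V$ and every incident edge of $E_{\max}(V)$; since each edge is seen at most twice, the total is $\mathcal{O}(|V|+|E_{\max}(V)|)\subseteq\mathcal{O}(s_{\max})$ \textit{find} calls, plus at most $2(|V|-1)$ \textit{union} calls for the merges. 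Step~3 performs one \textit{find} on each endpoint of every edge of $E_{\max}(V)$ and at most $|V|-1$ \textit{unions}, again $\mathcal{O}(s_{\max})$ operations. Summing gives a total of $f(s_{\min},s_{\max})\in\mathcal{O}(s_{\min}+s_{\max})$ union-find operations over a universe of $|V|$ elements.

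Finally, invoking Tarjan's amortized analysis of union by rank with path compression, the $f(s_{\min},s_{\max})$ operations take time $\mathcal{O}\bigl(f(s_{\min},s_{\max})\cdot\alpha(f(s_{\min},s_{\max}),|V|)\bigr)$, which absorbs the $\mathcal{O}(s_{\min}+s_{\max})$ overhead of constructing the edge sets and performing the scans. This yields the stated bound and completes the proof.
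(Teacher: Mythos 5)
Your proposal is correct and follows essentially the same route as the paper: bound the number of $union$/$find$ operations in each of the three steps by $\mathcal{O}(s_{\min}+s_{\max})$ and then invoke the Tarjan--van Leeuwen amortized bound $\mathcal{O}(m\cdot\alpha(m,n))$. The one genuine addition is your explicit argument that step~2 leaves no $3$-merging behind (via the observation that after the scan at a node $b$ all of $b$'s max-power neighbours lie in at most two components), a correctness point the paper leaves implicit in its description of the implementation; that argument is sound and worth spelling out.
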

\begin{proof}
 The total number of $union$ and $find$ operations $f(s_{\min},s_{\max})$ is linear in $s_{\min}$ and $s_{\max}$ as discussed above. Tarjan and Leeuwen show in \cite{TL84} that any sequence of $m\in\mathbb{N}$ $union$ and $find$ operations on a union-find data structure saving $n\in\mathbb{N}$ elements can be performed in time $\mathcal{O}(m\cdot\alpha(m,n))$. Fredman and Saks show in \cite{FS89} that this bound in tight.\qed
\end{proof}

\clearpage
\bibliographystyle{plain}
\bibliography{literature}
\end{document}